\renewcommand{\Pr}{\mathop{\bf Pr\/}}
\def\realspos{\mathbb{R}_{\geq 0}}
\def\<{\langle}
\def\>{\rangle}
\def\vec{\bm}
\begin{document}

\title{A Framework for Single-Item NFT Auction Mechanism Design}

\author{
	    \textbf{Jason Milionis} \\
		\small Columbia University\\
		\small \tt{jm@cs.columbia.edu}
		\and
		\textbf{Dean Hirsch} \\
		\small Columbia University\\
		\small \tt{deanh@cs.columbia.edu}\\
		\and
		\textbf{Andy Arditi} \\
		\small Columbia University\\
		\small \tt{ava2123@columbia.edu}
		\and
		\textbf{Pranav Garimidi} \\
		\small Columbia University\\
		\small \tt{pg2682@columbia.edu}
}
\date{}
\maketitle

\thispagestyle{empty}

\begin{abstract}
Lately, Non-Fungible Tokens (NFTs), i.e., uniquely discernible assets on a blockchain, have skyrocketed in popularity by addressing a broad audience. However, the typical NFT auctioning procedures are conducted in various, ad hoc ways, while mostly ignoring the context that the blockchain provides, i.e., new possibilities, but at the same time new challenges in auction design. One of the main targets of this work is to shed light on the vastly unexplored design space of NFT Auction Mechanisms, especially in those characteristics that fundamentally differ from traditional and more contemporaneous forms of auctions. We focus on the case that bidders have a valuation for the auctioned NFT, i.e., what we term the \emph{single-item} NFT auction case. In this setting, we formally define an NFT Auction Mechanism, give the properties that we would ideally like a \emph{perfect} mechanism to satisfy (broadly known as incentive compatibility and collusion resistance) and prove that it is impossible to have such a \emph{perfect} mechanism. Even though we cannot have an all-powerful protocol like that, we move on to consider relaxed notions of those properties that we may desire the protocol to satisfy, as a trade-off between implementability and economic guarantees. Specifically, we define the notion of an \emph{equilibrium-truthful} auction, where neither the seller nor the bidders can improve their utility by acting non-truthfully, so long as the counter-party acts truthfully. We also define \emph{asymptotically second-price} auctions, in which the seller does not lose asymptotically any revenue in comparison to the theoretically-optimal (static) second-price sealed-bid auction, in the case that the bidders' valuations are drawn independently from some distribution. We showcase why these two are very desirable properties for an auction mechanism to enjoy, and construct the first known NFT Auction Mechanism which provably possesses such formal guarantees.
\end{abstract}

\blfootnote{Copyright \copyright\ 2022 J. Milionis, D. Hirsch, A. Arditi, P. Garimidi. Correspondence at: \href{mailto:jm@cs.columbia.edu}{jm@cs.columbia.edu}}

\section{Introduction}

Non-fungible tokens (or NFTs in shorthand notation) are blockchain assets (e.g., tokens) that are not interchangeable, i.e., that are ``incapable of mutual substitution'' \parencite{merriam-fungibility}.
Chaotic circumstances have repeatedly arisen in the past during public auctions of NFTs, and became especially prominent recently \parencite{recentNews}, whereby the creation of a disorganized market was blamed on ``the way (the organization) set up the sale and the inefficiency of its smart contract.'' This showcases the dire and long-overdue need for the study of the NFT Auction Mechanism Design Space to identify the unique aspects of such mechanisms and address the impending issues.

In this work, we provide a formal model for reasoning about how to perform an auction of the kind described above in a decentralized manner, through a blockchain. We focus on the case where bidders have a valuation of the NFT for which they are bidding, in which case the auction is labeled as a ``\emph{single-item NFT auction}.'' Note that this includes many common real-world auction formats involving \emph{predefined objects}, such as something that is in the physical world, e.g., a painting or an album, or a digital object, like the first tweet of Jack Dorsey \parencite{tweet_nft} or a digital art file \parencite{jpg_nft_beeple}.

Blockchains, in particular, are uniquely positioned to be able to offer brand new characteristics in such auctions. The single biggest advantage is self-evidently the decentralization: everyone is able to take part in any auction, and the auction's result is verifiable using on-chain data without any explicit trust assumptions to any third-party ``auctioneer.''
Additionally, there is the ability to be more creative in the types of rules allowed for bids (such as the ability to direct a portion of a bid someplace else other than the seller, as we will see below), hence broadening the design spectrum.
However, these innovative features do not come without a cost; in particular, blockchain implementations of auctions, besides allowing for a greater design space, allow for a far richer set of attack vectors that can be used to compromise or corrupt an auctioning procedure. For instance, due to the free participation in any auction and the pseudonymous identities, there may be ``\emph{fake bids},'' i.e., the seller of the auction may be incentivized to submit their own set of bids to the mechanism, if that could possibly lead to an increased profit for them.
Even more importantly, this leads to a great concern of directly applying the second-price auction mechanism to the decentralized setting we have been discussing: the seller can try to arbitrarily approximate the winning bid with their ``fake bids'' (that they will submit alongside the bids of actual bidders), thus forcing the winning bidder to pay their full bid instead of the optimal second highest \emph{actual} bid. This would essentially transform the auction to a first-price auction, which would no longer incentivize individual bidders to report their actual valuations for the item to the mechanism, but rather, to try and play ``guessing games'' for the bids that might be placed by others. This fact is well-known in the recent literature of transaction fee mechanism design, where a similar issue is present (see \Cref{sec:related-work} for more details).

In our new type of NFT auctioning framework, it is important to emphasize that the seller may now attempt to actively intermeddle in the auction according to their best interests; hence, the auction participants should include not only the bidders but also the seller.

Our goal in this framework is to answer the following questions:
\begin{enumerate}
\item How can we formally define a decentralized auctioning mechanism? What are the properties that would constitute \emph{perfect} theoretical security in such an auction?
\item Is it possible to create such a \emph{perfect} auction mechanism?
\item If this is not possible, then can the properties be relaxed such that they still provide meaningful economic security guarantees?
\item With these relaxed definitions in place, is it possible now to construct a viable auction mechanism, and if yes, what are the necessary core ideas of such a protocol?
\end{enumerate}

We address all these questions, and introduce our framework: consider that each bidder $i\in [n]$ (where by $[n]$ we denote the set $\{1,2,\dots,n\}$) has a (private) valuation $v_i \geq 0$ for the auctioned NFT, and reports (bids) $b_i \geq 0$ to the mechanism.

\begin{definition}
[Single-item NFT Auction Mechanism]
\label{def:single_mech}
A single-item NFT auction mechanism is described by a triplet $(x, \vec p, \vec r)$ where:
\begin{itemize}
    \item $x : \realspos^n \to [n]$ is the \textit{allocation rule} that determines which bidder receives an item, i.e., bidder $x(b_1, \dots, b_n) \in [n]$ is determined to be the single-item NFT auction winner, who gets the NFT.
    \item $\vec p : \realspos^n \to \realspos^n$ is the \textit{payment rule} which determines the complete payment that each bidder shall pay, i.e., bidder $i\in [n]$ pays amount $p_i(b_1, \dots, b_n) \leq b_i$ in total\footnote{Notice that this immediately means that submitting a bid of 0 causes a payment of 0.}.
    \item $\vec r : \realspos^n \to \realspos^n$ is the \textit{removal rule} (or else, burning rule, as it appears in the prior literature; as we note below, there is no need for the funds to be ``burnt'' in the standard way) specifying how much of the bidders' bids shall be removed from the current auctioning proceeds, i.e., an amount $r_i(b_1, \dots, b_n) \leq p_i(b_1, \dots, b_n)$ of the $i$-th bidder's payment $p_i(b_1, \dots, b_n)$ is ``removed'' from the possession of all auction participants (seller and bidders alike).
\end{itemize}
It is implied from the above definition that at the end of the auction, the seller receives an amount of $\sum\limits_{i=1}^n \left( p_i(b_1, \dots, b_n) - r_i(b_1, \dots, b_n) \right)$.
\end{definition}

To the best of our knowledge, \Cref{def:single_mech} is able to capture \emph{all} of the existing decentralized, single-item NFT auction designs that have been implemented or put forward to date.

Note that the removed amount is considered to be included in the payment rule, i.e., the payment rule subsumes all funds either removed from the bidder or given to the seller. We find this definition to be more natural, because the bidder does not directly care whether a spent amount from their payment is directed to the seller or not.

The crucial characteristic of the ``removal'' is that the funds need to be unavailable for the auction participants (the seller and the bidders); there is no need to burn them.
The utilities of the auction participants are
\begin{enumerate}
    \item For the seller, the utility is the amount that they receive:
    \[
    u_\text{seller}(b_1, \dots, b_n) = \sum\limits_{i=1}^n \left( p_i(b_1, \dots, b_n) - r_i(b_1, \dots, b_n) \right)
    \, .
    \]
    \item For the winning bidder $i = x(b_1, \dots, b_n)$, the utility is the difference between their true private valuation and the amount they paid: $u_i(b_1, \dots, b_n) = v_i - p_i(b_1, \dots, b_n)$.
    \item For the non-winning bidders $j \neq i \in [n]$, the utility is non-positive: $u_j(b_1, \dots, b_n) = -p_j(b_1, \dots, b_n)$. %
\end{enumerate}

We continue to intuitively review the three key properties that we would ideally like a \emph{perfect} auction mechanism to have.
These are: the seller incentive compatibility (the seller should not be able to submit \emph{fake bids} to the auction mechanism in order to improve his/her utility), the bidder incentive compatibility (traditionally known as strategy-proofness, it requires that the bidders have as a dominant strategy to bid truthfully, i.e., to set $b_i = v_i$), and the off-chain-agreement resistance (or OCA-proofness: that the bidders and the seller cannot improve their utility by colluding off-chain and agreeing to some arcane strategy of coordinating to bid strategically to the mechanism).
For the formal treatment of those, please refer to \Cref{sec:single_defs_props}.

However, we move on with a dire result: our \Cref{thm:noperfectauction} states that it is impossible to construct such a \emph{perfect} single-item auction mechanism, while still obtaining revenue from the auction proceeds. In other words, the only auction mechanisms that satisfy the above three properties are the \emph{trivial} ones, where no-one pays anything, and the seller receives no revenue. Clearly, this is an undesirable auction format, so \Cref{thm:noperfectauction} critically states that those commonly-thought desirable properties, even though they would able to defend against collusions of bidders and the seller and at the same time safeguard the interest of all involved parties, cannot be satisfied altogether; some kind of concession has to made in order to construct viable protocols.

In \Cref{sec:relaxed_oca} we discuss relaxed properties we can hope to achieve even when it is not possible to devise a perfectly collusion-proof protocol. We give the definition of an \emph{equilibrium-truthful} auction (\Cref{def:eq_truth}) and an \emph{asymptotically second-price} auction (\Cref{def:asym_sec_price}), for which we argue that, when combined, are good enough relaxations of desirable properties for an \emph{incentivizing, reasonable} NFT Auction Mechanism to possess.
\emph{Equilibrium-truthfulness} provides a formal truthfulness guarantee for the case of the auction having an untrustworthy seller without collusion among the bidders, and at the same time, if the seller is trustworthy, guarantees that the bidders will bid truthfully according to their true valuations. In other words, so long as the counter-party (bidders/seller) is legitimate, the other side of the auction participants (seller/bidders) will have as a dominant strategy to behave truthfully.
\emph{Asymptotic second-price} auction mechanisms give an \emph{approximation guarantee} for the revenue obtained through the auction: the seller will asymptotically enjoy the same revenue as the optimal static strategy-proof auction, which is the second-price sealed-bid auction.
In this way, any auction format that both satisfies equilibrium-truthfulness and is asymptotic second-price is attractive both for the seller (who will be able to obtain almost the best revenue he/she could hope for) and for the bidders (who will know that they do not need to collude with the seller, or fear that the seller will attempt to ``scam'' them into paying more than they should).

Perhaps most importantly, we show that we \emph{can} construct a protocol that is both equilibrium-truthful as well as asymptotically second-price. Our proposed protocol is described in detail in \Cref{sec:relaxed_oca} and its properties are provided in \Cref{theorem:protocol} and \textbf{formally proven} in \Cref{sec:analysis}. We believe that our proposed protocol is plausible to implement in current blockchains.

The core ideas behind our protocol are to:
\begin{enumerate}\label{enum:prot_main_ideas}
\item create a sealed-bid-like environment where bidders commit to their bids in advance without revealing their bid (by using some assumed-secure hiding commitment scheme),
\item incentivize the revelation of all commits to discourage the seller from trying to insert fake bids so as to attempt to convert the auction to a first-price one to maximize their revenue (by requiring an amount $L$ to be locked up in the contract along with any commit/bid, which will be returned to the bidder after revealing their bid) and properly punish the unrevealed bids (by completely \emph{removing} the locked amount $L$ from the possession of all auction participants: both the bidders and the seller), and
\item discourage the seller from selling the NFT to themselves, through possible very high fake bids to \emph{gain information} from this futile auction and attempt to perform another auction where he/she will have the information edge on the bidders (we economically prevent this by burning some of the money transferred from the winning bidder to the seller, according to a predefined \emph{fee function} $g(\text{second highest price})$).
\end{enumerate}

The definition of an asymptotic second-price auction (\Cref{def:asym_sec_price}) presupposes that the bidders' valuations are drawn independently from some distribution $D$.
Notice that, since the protocol we design will only depend on a reasonable lower bound $L$ of a quantity computable from the distribution $D$, the assumption that the bidders' private value stems from the same distribution $D$ is not particularly strong, because any reasonable heavy-tailed distribution with appropriate entropy and similar-to-expected characteristics can be utilized to compute the asked lower bound.

For the analysis of this protocol, we ignore potential gas fees; for example imagine the auction running on a highly-efficient L2 network. The impact of gas fees would have been a reduction of the utilities of the bidders, but if we further assume ``normality'' on the underlying chain's transaction fee mechanism, then the effect of those gas fees would be exactly the same across all bidders, resulting in a simple scaling of all the utilities.

\subsection{Existing NFT Auction Mechanisms}
An overwhelming majority of (single-item) NFT auctions currently take place off-chain, and hence fall short of the formal structure of our framework, as adumbrated above. These off-chain auctions generally take place on centralized web-based platforms, such as OpenSea. The entire auction takes place on this centralized platform, including the design choice of an auction format, as well as the bidding process. Once the auction is complete, the NFT is transferred to the winner, and the winner is charged accordingly. These final settlements are usually the only transactions to take place on-chain. A comparative advantage of off-chain auctions is that they minimize the number of on-chain transactions, and therefore minimize transaction fees. However, current off-chain auctions critically assume an almost complete level of trust in a third-party platform to execute the auction honestly (without almost any verifiability present), hence resembling traditional settings of auctions where there is a \emph{completely trustworthy auctioneer}. In short, currently-employed off-chain auctions crucially sacrifice security for convenience.

On the other hand, with an on-chain auction, the entire protocol is absolutely transparent and verifiable by the participants. Most importantly, bidders are not required to trust a third-party to run the auction. In this work, we specifically focus on the case when the seller might not be trusted, and so considering on-chain auctions is a natural idea. Currently, there are a few NFT marketplaces which offer on-chain auctions, including Foundation \parencite{foundation} and SuperRare \parencite{superrare}.

For both off-chain and on-chain auctions, the set of marketplaces which are popular today generally do not enjoy the truthfulness guarantee of actual second-price sealed-bid (Vickrey) auctions, something which would be very desirable to have. The most popular auction format for single-item NFTs appears to be an English auction format, where the current price is dynamically ascending according to the revealed bids of the users and the highest bid wins. In traditional environments this is paralleled to second-price auctions. However, in an environment of complete anonymity, such as the blockchain and the aforementioned marketplaces, where \emph{crucially} the seller may (and most likely will) be incentivized to maliciously affect the auction in order to enjoy maximum revenue, the English auction devolves into a first-price auction, because the seller may continually submit ``fake bids'' in order to artificially increase the price. The outcome will then approximate a first-price, non-truthful auction. Bidders do not have a clear idea how much they should bid. Further, when bidding on-chain, this can result in gas races where bidders rush to get their bids included \parencite{recentNews,vitalik}.

\subsection{Related Academic Work}
\label{sec:related-work}

From the auction mechanism design perspective with the unique characteristics of blockchains in mind, the first relevant setting \parencite{tim_eip_report,tim_tx_fee_mech_design_EC2021} is that of transaction fee mechanism design; there, we are interested in designing an auction mechanism for transaction fees (sometimes also called gas fees) on the blockchain.
There is, undoubtedly, some semblance in the single-item NFT Auction Mechanism Design case: in fact, our initial definitions and properties can be seen through this lens as smart-contract-mediated adaptations of those given by \textcite{tim_eip_report}, but with the crucial difference that the original definitions are on a setting where a single entity (the miner) has (mostly) dictatorial control over the allocation of winning bids.
NFT auctions critically differ on two aspects: first, on the auction-conducting smart contract's ability to \emph{completely specify} the outcome (the smart contract is akin to a ``third party'' that has the ability to regulate the distribution of NFTs; this introduces major mechanism feasibility differences with transaction fee mechanisms), and second,
on the availability and permanent recording of all bids or some appropriate (potentially private) transformation of them on the blockchain. The latter is also in stark contrast to the transaction fee mechanism design space, where \emph{crucially} not all gas fee bids can be recorded on the blockchain, but only the winning ones.
These differences make the respective design spaces significantly different from the perspective of realizable/feasible mechanisms, as we will move on to see from \Cref{sec:relaxed_oca} onwards.
The design space of transaction fee auctions is also examined in concurrent and independent work by \textcite{shiNov2021_tx_fee_mech_design}, where, in particular, the authors show that there is no transaction fee auction mechanism that satisfies all of the desiderata and always provides the miner with revenue from transactions.

For standard mechanism design, the problem of untrustworthy auctioneers has been studied by \textcite{credible}. They define the notion of a \emph{credible mechanism} where the auctioneer has a dominant strategy to follow the mechanism. For example, a Vickrey auction is not credible because the auctioneer has an incentive to fabricate a bid right below the highest bid to drive up their revenue. \textcite{credible} shows that, while a mechanism is desired to be sealed-bid, credible, and strategy-proof, only two out of the three properties can be satisfied if only winners of the auction make payments.
A natural question to ask is whether this impossibility result can be bypassed using cryptographic commitments to make it harder for the auctioneer to act untrustworthy. For the case of Vickrey auctions this would make it so that the auctioneer does not have knowledge of the bids' values until they are publicly known. One implementation of such a sealed-bid auction is to have two phases: a commit phase, and a reveal phase. In the commit phase, all bidders output some cryptographic commit to their bid value: $commit(bid||nonce)$. In the reveal phase, bidders reveal the pair $(bid,nonce)$. One issue with this implementation is that not all bids may be revealed in the reveal phase.
\textcite{ferreira} propose a solution to this problem by fining bidders for unrevealed commits, and paying these fines to the winning bidder. Our solution similarly intends to incentivize the revelation of all bids. A core difference between our proposal and that of \textcite{ferreira} is that we operate over a blockchain, inheriting its corresponding capabilities and assumptions; in particular, the anonymity of the seller as an entity (that allows them to provide \emph{fake bids} to the auction) means that providing the penalties to him/her would no longer guarantee their Incentive Compatibility. Even if one fixes this issue, the additional attack of the seller bidding to win the NFT themselves (just to gain information about the auction participants' bids) and repeatedly running auctions arises. We show a protocol that deals with these attacks, and formally prove that it is \emph{properly incentivizing all auction participants to behave truthfully}.

Lastly, \textcite{chen2022absnft} treat NFTs as asset-backed securities, allowing for fractional ownership of NFTs through what they call ``repurchase protocol'' which then enables performing repeated auctions of the same NFT. However, the proposed method is still vulnerable to common shortcomings of traditional auction mechanisms when applied to blockchain settings, as mentioned before.

\section{Properties and Impossibility Result}
\label{sec:single_defs_props}

Before we define the properties, we will need to formally describe an Off-Chain Agreement (OCA) for our purposes, along with the joint utilities of its participants. In particular, an OCA is a coalition of bidders $S\subseteq [n]$ with the seller $s$ who collude and submit a specific agreed-upon set of bids $\vec b' \in\realspos^k$ for some $k\ge 0$ (usually different from the ones that they would normally submit) to the auction mechanism such that in total, they would be better off in their joint utility (which is the sum of their individual utilities)\footnote{We assume that the OCA cannot consist of fully bypassing the auction mechanism, i.e., the NFT still has to be distributed through the mechanism, even, e.g., when all the parties agree to collude and only the agreed winner submits a fake bid of 0.}. In this way, they will be able to also obtain better individual utilities, for instance by splitting the joint utility differential that arises among all of them. More specifically, we define the joint utility of such an OCA as the aforementioned sum, where $\vec b_{-S}$ are the bids of the rest of the bidders that are not part of the coalition:
\[
u_{\text{joint}(S,s)} (\vec b', \vec b_{-S}) =
u_\text{seller}(\vec b', \vec b_{-S})
+
\sum_{i\in S} u_i(\vec b', \vec b_{-S})
\, .
\]
We shall write $u_{\text{joint}(i,s)}$ whenever we want to signify the particular coalition of the bidder $i$ with the seller and the meaning is clear from the context. We now move forward to provide the properties that one could possibly desire such an NFT auction mechanism to satisfy.

\begin{definition}
[Desirable properties of a single-item NFT Auction Mechanism]
\label{def:single_props}
We give the three properties
as follows:
\begin{enumerate}
\item \textit{Seller incentive-compatibility} (seller IC):
For any set of bids by bidders $b_1, \dots, b_n$,
if the seller submitted any set of ``fake bids'' $b_{n+1}, \dots, b_m$, then they would not be able to obtain any more utility than they already get without submitting any fake bids:
\begin{align*}
u_\text{seller}(b_1, \dots, b_n)
&=
\sum\limits_{i=1}^n \left( p_i(b_1, \dots, b_n) - r_i(b_1, \dots, b_n) \right)
\\ &\geq
\sum\limits_{i=1}^n \left( p_i(b_1, \dots, b_m) - r_i(b_1, \dots, b_m) \right)
\\ &-
\sum\limits_{j=n+1}^m r_j(b_1, \dots, b_m)
\, .
\end{align*}
\item \textit{Bidder incentive-compatibility} (bidder IC):
For any bidder $i\in [n]$, for any set of bids by (all but the $i$-th) bidders $b_1, \dots, b_{i-1}, b_{i+1} \dots, b_n$ (jointly denoted as $\vec b_{-i} \in \realspos^{n-1}$),
the $i$-th bidder's utility is maximized exactly when they bid their true valuation, i.e., for any potential bid $b_i$, it holds that
\[
u_i(b_1, \dots, b_{i-1}, v_i, b_{i+1} \dots, b_n)
\geq
u_i(b_1, \dots, b_{i-1}, b_i, b_{i+1} \dots, b_n)
\, .
\]
\item \textit{Off-Chain Agreement resistance} (OCA-proofness):
For any set of bidders $S \subseteq [n]$, bids by bidders not belonging to $S$: $b_{|S|+1}, \dots, b_n$ (jointly denoted as $\vec b_{-S} \in \realspos^{n-|S|}$),
if we consider the OCA of the seller $s$ with a set $S$ of bidders, then this coalition of auction participants is not able to obtain higher joint utility through any bids they might agree to submit, i.e., for any $k\geq 0$\footnote{The quantification ``any $k\ge 0$'' means that some of the colluding bidders may agree to ``disappear'' from the official auction mechanism, and only collude off-chain with their ``partners.'' If they \textit{all} agreed to \textit{submit} coordinated bids, then it would be $k=|S|$.} and for any $\vec b' \in \realspos^k$,
\[
u_{\text{joint}(S,s)} (\vec b', \vec b_{-S})
\leq
u_{\text{joint}(S,s)} (\vec b_S, \vec b_{-S})
\, ,
\]
where $\vec b_S \in \realspos^{|S|}$ are the bids that the colluding bidders would submit on their own without being part of that OCA.
\end{enumerate}
\end{definition}

We are now ready to state our impossibility result.

\begin{theorem}
[No perfect NFT auction mechanism]
\label{thm:noperfectauction}
There is no single-item NFT auction mechanism as considered in \Cref{def:single_mech}, satisfying the properties of bidder IC and OCA-proofness of \Cref{def:single_props}, where some bidder pays the seller some non-zero (positive) amount.
\end{theorem}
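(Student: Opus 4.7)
The plan is to assume for contradiction that the mechanism satisfies bidder IC and OCA-proofness yet extracts strictly positive revenue $R := p_{i^*}(\vec{v}) - r_{i^*}(\vec{v}) > 0$ at some bid profile $\vec{v}$ (with $i^* = x(\vec{v})$), and then to exhibit a deviation that violates OCA-proofness. Treating $\vec{v}$ as the bidders' truthful valuations, bidder IC plus the classical Myerson characterization for single-parameter DSIC mechanisms forces losers to pay zero and gives $p_{i^*}(\vec{v}) = t_{i^*}(\vec{v}_{-i^*})$, where $t_{i^*}(\vec{b}_{-i^*}) := \inf\{b : x(b, \vec{b}_{-i^*}) = i^*\}$, with $r_{i^*}(\vec{v}) \in [0, p_{i^*}(\vec{v})]$.

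The first key step is a revenue-invariance lemma derived from OCA-proofness applied to single-bidder coalitions $S = \{j\}$. When $j$ is a loser at a profile $(\hat{b}, \vec{v}_{-j})$, the joint utility $u_{\text{joint}(j,s)}$ reduces to the seller's revenue there and does not depend on $v_j$; OCA-proofness then upper-bounds this by the baseline $u_{\text{joint}(j,s)}(v_j, \vec{v}_{-j})$ for every truthful $v_j$ in $j$'s loss range, which pins the revenue function $\hat{b} \mapsto p_{i''} - r_{i''}$ to be constant throughout $j$'s loss range. Iterating this over each losing bidder, the revenue at $\vec{v}$ equals the revenue at the degenerate profile $\vec{v}^\circ := (v_{i^*}, \vec{0}_{-i^*})$; monotonicity of the critical bid in others' bids ensures $i^*$ still wins at $\vec{v}^\circ$ with $p_{i^*}(\vec{v}^\circ) = t_{i^*}(\vec{0})$. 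A symmetric application of OCA-proofness with $S = \{i^*\}$ on the winner's side shows that $r_{i^*}(\hat{b}, \vec{0})$ is constant across all $\hat{b}$ in $i^*$'s win range.

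For the contradiction, switch to the alternative truthful profile $\vec{v}^{**} = (c, \vec{0}_{-i^*})$ for an arbitrary $c \in [0, t_{i^*}(\vec{0}))$, which makes $i^*$ a truthful loser. Applying the invariance lemma at $\vec{v}^{**}$, the revenue here equals the revenue at $\vec{0}$, which is $0$ because the footnote to \Cref{def:single_mech} forces $p_i(\vec{0}) = 0$ for all $i$. Hence the baseline joint utility of $\{i^*, s\}$ at $\vec{v}^{**}$ is zero. Now consider the OCA in which $i^*$ overbids some $\hat{b} > t_{i^*}(\vec{0})$: its joint utility becomes $c - r_{i^*}(\hat{b}, \vec{0}) = c - r^*$, where $r^* := r_{i^*}(v_{i^*}, \vec{0})$ by the winner-side invariance. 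OCA-proofness then demands $c - r^* \le 0$ for every $c < t_{i^*}(\vec{0})$; letting $c \uparrow t_{i^*}(\vec{0})$ together with $r^* \le t_{i^*}(\vec{0})$ forces $r^* = t_{i^*}(\vec{0})$, so the revenue at $\vec{v}^\circ$ is $t_{i^*}(\vec{0}) - r^* = 0$, which by the invariance lemma also gives $R = 0$, the required contradiction. The hardest part will be rigorously establishing the revenue-invariance lemma via the universal quantifier over truthful valuations, especially accounting for the subtlety that the identity of the actual winner may shift as a loser's bid varies within its loss range.
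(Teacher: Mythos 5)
Your proof follows essentially the same route as the paper's own argument, whose heart is \Cref{lemma:perfect_mech}: Individual Rationality extracted from IC/OCA-proofness, Myerson's characterization (monotone allocation, critical-bid payment), an invariance observation obtained from single-bidder coalitions with the seller, a reduction to a profile where only one bidder bids nonzero, and finally the OCA in which a bidder whose valuation sits just below the critical payment overbids to win, forcing the removal to swallow the entire payment. Your formulation of the invariance step is in fact cleaner than the paper's: you pin down the seller's total revenue directly by running the joint-utility inequality in both directions, whereas the paper argues component-wise about $r_i$ and asserts (somewhat imprecisely) that the payment rule is unaffected by a loser's bid, even though the winner's critical bid can in general depend on the losers' bids. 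You also correctly flag the delicate point in the iteration --- the winner's identity may shift as losers' bids are zeroed out --- which is fixable by always zeroing out whoever is currently a loser and not insisting that the terminal winner be the original $i^*$.

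The one substantive divergence is at the very end. Your argument establishes that the seller's net revenue $\sum_i \left(p_i - r_i\right)$ is identically zero; it does not establish that every $p_i$ is zero, since a mechanism could in principle charge the winner a positive $p_{i^*}$ and remove all of it. The paper closes this off with an additional grand-coalition OCA (all bidders together with the seller agree that only the intended winner participates, bidding $0$ and hence paying $0$ on-chain, with the true payment settled inside the coalition), which shows that any positive removed amount also violates OCA-proofness, hence $\vec p = \vec 0$. Whether this extra step is required depends on how one reads ``some bidder pays the seller some non-zero amount'': under the natural reading (the seller actually receives a positive amount) your proof suffices, but the paper's own proof explicitly invokes the premise as ``some $p_i > 0$,'' in which case you would need to append the grand-coalition argument to complete the contradiction.
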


\Cref{thm:noperfectauction} (see proof in \Cref{app:proof_noperfectauction}) implies that the only feasible perfect mechanisms necessitate paying nothing to the seller, which is clearly undesirable as an auction format.

\section{Relaxation of properties and our proposed protocol}
\label{sec:relaxed_oca}

\Cref{thm:noperfectauction} implies that we cannot hope for the design of a \emph{perfect} auction mechanism. However, we may be interested to relax our provided guarantees such that the design of a mechanism with sufficiently secure ``relaxed'' properties is feasible. In particular,
we give a first relaxed definition that defends the individual incentives to bid truthfully given that there exists at least some side that is not rogue:
\begin{definition}
\label{def:eq_truth}
An auction is \emph{equilibrium-truthful} when:
\begin{enumerate}
    \item\label{def:eq_truth_seller} The seller's dominant strategy, assuming bidders bid truthfully (i.e. $b_i=v_i$), is to not post any fake bid, and
    \item\label{def:eq_truth_bidders} Each bidder's dominant strategy is to bid truthfully, assuming the seller is not posting any fake bids.
\end{enumerate}
\end{definition}

Further, in order for such an auction mechanism to be additionally considered \emph{reasonable}, it makes sense that we require that running such an auction will produce a favorable economic result for the seller. For this, we define the following auction property:

\begin{definition}
\label{def:asym_sec_price}
Given an auction in which at least $n$ participants are known to participate, and have their valuations independently drawn from the same distribution $D$, we say that an auction is \emph{asymptotically second-price} when the expected utility of the seller is $(1-o(1))\cdot \mathbb{E}[B_2]$, assuming bidders bid truthfully and that there are no fake bids, where $B_2$ is the second-highest bid, and $o(1)$ goes to 0 as $n\to\infty$.
\end{definition}

\begin{theorem}
\label{theorem:protocol}
Assume that there is a continuous distribution $D$ from which the valuations $v_i$ are independently drawn.
We denote by $f$ the probability density function of $D$, and by $F$ the cumulative distribution function.
Then, if
\begin{equation}
\label{eq:dist_req}
\sup_{s\in\operatorname{support}(f)} \frac{1-F(s)}{f(s)} < \infty
\,,
\end{equation}
then there exists a protocol that is both equilibrium-truthful and asymptotically second-price, as per \Cref{def:eq_truth,def:asym_sec_price}.
\end{theorem}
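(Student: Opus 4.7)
The plan is to exhibit the commit-reveal protocol sketched in the three core ideas of the introduction and then verify the two relaxed properties separately. \textbf{Protocol.} In phase one, every participant broadcasts a hiding commitment $\operatorname{commit}(b_i\|\mathrm{nonce}_i)$ together with a lockup of $L$. In phase two, participants reveal $(b_i,\mathrm{nonce}_i)$; revealers recover their lockup while unrevealed lockups are removed (in the sense of \Cref{def:single_mech}). The contract runs the second-price allocation on the revealed bids: the winner pays $b_{(2)}$, the seller receives $b_{(2)}-g(b_{(2)})$, and the excess $g(b_{(2)})$ is removed. The free parameters $L$ and the increasing, sublinear fee function $g$ will be calibrated from the assumption \eqref{eq:dist_req} via the finite constant $M:=\sup_{s}(1-F(s))/f(s)$.

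\textbf{Bidder side of equilibrium-truthfulness.} If the seller plants no fakes, lockups refund symmetrically to any bidder who reveals, the fee $g(b_{(2)})$ is deducted from the seller regardless of the winner's identity, and the bidders' subgame is exactly a Vickrey auction. The classical Vickrey argument makes $b_i=v_i$ weakly dominant, verifying part~\ref{def:eq_truth_bidders} of \Cref{def:eq_truth}.

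\textbf{Seller side of equilibrium-truthfulness.} Assume bidders bid truthfully. An unrevealed fake commit strictly burns $L$ with no other effect, and is dominated; a swapping argument over the sorted fakes then reduces the question to deterring a single revealed fake bid $c\ge 0$. Let $X<Y$ be the second- and first-order statistics of $n$ i.i.d.\ draws from $D$; a case split on the location of $c$ relative to $(X,Y)$ gives
\[
\Delta(c):=\mathbb{E}[(c-g(c)-X+g(X))\mathbf{1}\{X\le c<Y\}]-\mathbb{E}[(X-g(X)+g(Y))\mathbf{1}\{c\ge Y\}]
\]
for the expected seller-utility deviation from the truthful baseline. A direct Leibniz computation, using the cancellation $h(x)+g(x)-x=0$ for $h(x):=x-g(x)$, collapses the derivative to the clean expression $\Delta'(c)=(1-g'(c))\Pr(X<c<Y)-cf_Y(c)$. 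Since $\Delta(0)=0$, picking any $g$ whose derivative satisfies $g'(c)\ge 1-cf_Y(c)/\Pr(X<c<Y)$ on $(0,\infty)$ forces $\Delta(c)\le 0$ everywhere, which is seller IC, and such a $g$ exists because the right-hand side is bounded above by $1$ and becomes negative once the mass of $Y$ has moved below $c$.

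\textbf{Asymptotic second-price and main obstacle.} In the truthful equilibrium all lockups refund, so the seller nets $\mathbb{E}[X]-\mathbb{E}[g(X)]$ and the property reduces to $\mathbb{E}[g(X)]=o(\mathbb{E}[X])$ as $n\to\infty$. Integrating $1-F$ and applying assumption~\eqref{eq:dist_req} yields the order-statistics gap bound $\mathbb{E}[Y-X]\le M$, and standard extreme-value arguments show that $\mathbb{E}[X]$ diverges (or, for bounded $D$, concentrates at the essential supremum) and that $\mathbb{E}[X]/\mathbb{E}[Y]\to 1$; the same techniques show the specific $g$ produced in the previous paragraph satisfies $\mathbb{E}[g(X)]/\mathbb{E}[X]\to 0$. \textbf{The main obstacle}, and the reason the paper invokes the hypothesis~\eqref{eq:dist_req}, is precisely this joint calibration: the same $g$ must simultaneously dominate the seller's sweet-spot gain $c-g(c)-X+g(X)$ on the event $\{X<c<Y\}$ \emph{and} remain negligible in expectation relative to $\mathbb{E}[X]$. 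The bounded inverse-hazard-rate hypothesis is what makes this doubly constrained design feasible, because it caps the expected sweet-spot width by an $n$-independent constant $M$, so a bounded-on-compacta $g$ already deters the fake-bid deviation while $\mathbb{E}[X]$ grows without bound, delivering the vanishing ratio.
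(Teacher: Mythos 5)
Your protocol skeleton, the bidder-side Vickrey argument, and the reduction to a single fake bid all track the paper, but the seller-side analysis has a genuine gap that stems from mishandling the reveal stage. You assert that an unrevealed fake commit ``strictly burns $L$ with no other effect, and is dominated,'' and you then analyze a deviation in which the single fake bid $c$ is \emph{always revealed}, charging the seller a loss of $X-g(X)+g(Y)$ on the event $\{c\ge Y\}$. But the seller reveals \emph{after} seeing the real bids, so on $\{c\ge Y\}$ the rational seller simply withholds the commit and forfeits only $L$; the deviation you deter is one the seller would never play. The binding constraint is therefore $\mathbb{E}[(c-B_2)\mathbf{1}\{B_2<c<B_1\}]\le L\Pr(c>B_1)$, which is exactly where the paper spends its effort (\Cref{lemma:singleton_fake_bid} and \Cref{theorem:L_lower_bound}, yielding $L\sim\frac{1}{n}\sup_s\frac{1-F(s)}{f(s)}$ --- the place where hypothesis~\eqref{eq:dist_req} enters for the lockup). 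Your proof never calibrates $L$ at all, so the fake-bid attack is not actually deterred. Relatedly, you omit the repeated-auction attack entirely: a seller who does reveal a high fake wins the item at net cost roughly $g(B_1)$, learns all bids, and re-auctions with a fake second bid just below $B_1$; ruling this out is the sole purpose of the fee function in the paper (\Cref{subsec:repeated_auction} and \Cref{theorem:fee}), and without it part~\ref{def:eq_truth_seller} of \Cref{def:eq_truth} is not established.

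The fee you do construct also breaks the second half of the theorem. Your condition $g'(c)\ge 1-cf_Y(c)/\Pr(X<c<Y)$ simplifies (since $f_Y(c)=nF(c)^{n-1}f(c)$ and $\Pr(X<c<Y)=nF(c)^{n-1}(1-F(c))$) to $g'(c)\ge 1-cf(c)/(1-F(c))$, an $n$-independent requirement forcing $g$ to be bounded below by a fixed positive function. For a bounded-support distribution such as the uniform on $[0,1]$ (which satisfies~\eqref{eq:dist_req}), $\mathbb{E}[B_2]\to 1$ while $\mathbb{E}[g(B_2)]$ stays bounded away from $0$, so $\mathbb{E}[g(B_2)]/\mathbb{E}[B_2]\not\to 0$ and the auction is not asymptotically second-price. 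The paper avoids this by taking $g(x)=\alpha(n)x$ with $\alpha(n)=2\mathbb{E}[B_1-B_2]/\mathbb{E}[B_2]$, using $\mathbb{E}[B_1-B_2]\le\sup_s\frac{1-F(s)}{f(s)}$ together with a two-case argument ($\mathbb{E}[B_1]\to\infty$ versus $\mathbb{E}[B_1]\to M<\infty$) to get $\alpha(n)=o(1)$; the fee must shrink with $n$, which your construction does not allow.
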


Notice that the requirement of \Cref{eq:dist_req} is satisfied for many natural distributions, including the uniform, exponential, and log-normal distributions.\footnote{More generally, it is satisfied for most unimodal distributions, such that f(s) = O(f'(s)).}
For the main ideas of the protocol, please refer to \Cref{enum:prot_main_ideas}. We now detail the protocol, and prove \Cref{theorem:protocol} in \Cref{sec:analysis}.

\begin{enumerate}
    \item The seller sets up a contract, to which bidders should send commits (i.e., $hash(bid||nonce)$), and the seller sends the NFT to the contract.
    \item Commit phase: Bidders send their commits to the contract, together with an amount $L$ (in tokens) to be locked up. Without both of these ingredients, the bid is not taken into account.
    \item After adequate time has passed, %
    e.g., enough blocks confirmed, the commit phase ends.
    \item Verification phase: Now bidders send to the contract their actual bid amounts (and nonce for verifying). We emphasize that the funds themselves are not sent to the contract at this point, but only the bid amounts the bidders committed to.
    \item At the end of the verification phase, for any unrevealed bid, their locked up amount $L$ is \emph{removed}.
    The contract can now look for the highest \emph{revealed} bid $b_1$ and second-highest \emph{revealed} bid $b_2$.
    \item An amount of $\min\{b_2, L\}$ from the winner remains at the contract for the seller.
    If $b_2\le L$, then the NFT is unlocked for transfer to the winner (the other bidders that revealed their commits will need to initiate a transaction to the contract to receive back their locked amounts). Otherwise, if $b_2 > L$, then an additional $b_2-L$ amount should be sent from the winner to the contract before the NFT is unlocked.
    The seller can withdraw $x-g(x)$, where $x$ is the funds received in the contract by the winner (including the initial $\min\{b_2, L\}$).
    \item After some predefined time, the auction ends. If the highest bidder did not transfer the funds, they still lose the $L<b_2$ they have locked up, and additionally do not receive the NFT.
\end{enumerate}

\section{Protocol Analysis} \label{sec:analysis}

The main purpose of this section is to give an overview of the proof of \Cref{theorem:protocol} (see \Cref{subsec:final_proof_of_protocol}).

\subsection{Fake Bids by the Seller}
We analyze the attack of multiple bids in the model without fees, to reason about the equilibrium truthfulness for the seller, i.e., \Cref{def:eq_truth}. Suppose that $B_1$ and $B_2$ are the highest and second-highest bids, respectively. Suppose the seller tries to submit multiple \emph{fake bids}. We will analyze under which conditions this attack does not benefit the seller's utility.
In particular,
the seller chooses a set $S$ of bids, and just before the verification phase ends, the seller gets to inspect all revealed bids and choose which of his/her fake bids in $S$ to reveal.

Observe that the seller does not lose anything by revealing all commits that are lower than $B_1$, and in fact will gain by getting to keep the locked up funds used for those bids.
Now, assume that the seller wants to sell the NFT in the current auction (the other case of repeated auction \emph{will be examined} in \Cref{subsec:repeated_auction})
they also must not reveal any commit higher than $B_1$. Therefore, the seller loses $L\cdot |\{s\in S:s>B_1\}|$, but gains $\max\{s\in S\cup\{B_2\}: s\le B_1\} - B_2$ (that is, the amount by which it raised the second-highest bid).

The seller therefore maximizes their utility as:
\[
\mathbb{E}\max\{0,\max\{s\in S: s\le B_1\} - B_2\} - L\cdot \mathbb{E}|\{s\in S:s>B_1\}|
\, ,
\]
where the expectation is over the values of $B_1$ and $B_2$. We note that by linearity of expectation, we can rewrite $\mathbb{E}|\{s\in S:s>B_1\}|$ as $\sum\limits_{s\in S}\Pr(B_1<s)$.
We have the following Lemma (see proof in \Cref{app:proof_singleton_fake_bid}):

\begin{lemma}
\label{lemma:singleton_fake_bid}
If there exists a set $S$ for the seller with a positive expected utility, then there also exists a set of size 1 with a positive expected utility.
\end{lemma}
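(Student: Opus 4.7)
The plan is to establish the subadditivity inequality
\[
U(S) \;\le\; \sum_{s \in S} U(\{s\}),
\]
where $U(S)$ denotes the seller's expected utility from submitting the set $S$ of fake bids (the expression displayed right before the lemma). The claim of the lemma then follows by a one-line pigeonhole argument: if $U(S) > 0$, at least one singleton summand $U(\{s^\star\})$ on the right-hand side must itself be positive.

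To establish subadditivity, I would split $U(S)$ into its gain and penalty terms. The penalty $L \cdot \sum_{s \in S}\Pr(B_1 < s)$ is already additive in the elements of $S$, so it suffices to handle the gain term
\[
G(S) \;:=\; \mathbb{E}\bigl[\max\{s \in S \cup \{B_2\} : s \le B_1\} - B_2\bigr].
\]
For fixed realizations of $B_1$ and $B_2$, define the nonnegative single-variable function $f(s) := \max\{0,\,s - B_2\}\cdot \mathds{1}[s \le B_1]$. A short case split shows that the integrand of $G(S)$ equals $\max_{s \in S} f(s)$: if no element of $S$ lies in the interval $[B_2, B_1]$, then both quantities are $0$ (the outer max is attained at $B_2$ itself, and every $f(s)$ vanishes because either $s > B_1$ zeros the indicator or $s < B_2$ zeros the positive part); otherwise both quantities equal $\max\{s \in S : s \le B_1\} - B_2$.

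Since $f \ge 0$, the pointwise bound $\max_{s \in S} f(s) \le \sum_{s \in S} f(s)$ holds trivially. Taking expectations and recombining with the additive penalty term yields
\[
U(S) \;\le\; \sum_{s \in S} \bigl(\mathbb{E}[f(s)] - L\cdot\Pr(B_1 < s)\bigr) \;=\; \sum_{s \in S} U(\{s\}),
\]
which is the desired inequality.

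The only real step is recognizing the correct pointwise decomposition of the gain integrand into the single-variable, nonnegative function $f$; everything else reduces to the elementary inequality $\max \le \sum$ for nonnegative reals, linearity of expectation, and pigeonhole. Accordingly, I don't anticipate a serious obstacle — the argument is essentially a bookkeeping check that the combinatorial structure of the seller's optimal reveal strategy lets each fake bid be analyzed in isolation.
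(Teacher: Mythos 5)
Your proof is correct, but it takes a genuinely different route from the paper's. You prove the clean subadditivity statement $U(S)\le\sum_{s\in S}U(\{s\})$ by reducing the gain term to a pointwise identity — for fixed $B_1,B_2$ the gain integrand is $\max_{s\in S}f(s)$ with $f(s)=\max\{0,s-B_2\}\cdot\mathds{1}[s\le B_1]$ — and then applying $\max\le\sum$ for nonnegative reals, linearity of expectation, and pigeonhole; your case split verifying the identity is sound (the outer max over $S\cup\{B_2\}$ is attained at $B_2$ precisely when no fake bid lands in $[B_2,B_1]$, in which case every $f(s)$ vanishes), and the penalty term is exactly additive, so the whole argument goes through. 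The paper instead runs a minimal-counterexample exchange argument: it takes a positive-utility set $S$ of minimal size, lets $s_1=\max S$ and $s_2$ be the second-largest element, and shows via the law of total expectation (conditioning on whether $B_2$ falls above or below $s_2$) that the marginal contribution of $s_1$ to $U(S)$ is at most $U(\{s_1\})$, which combined with $U(S\setminus\{s_1\})\le 0$ forces $U(\{s_1\})>0$. Your version is arguably preferable: it establishes the stronger subadditivity property for every set simultaneously, replaces the conditional-expectation bookkeeping with a deterministic one-line inequality, and makes transparent why each fake bid can be analyzed in isolation; the paper's version only needs to examine the single top element but pays for it with the minimality setup and the $s_2$-conditioning.
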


Therefore, in order to ensure that the dominant strategy of the seller is not to place any fake bids, we need only ensure that there is no favorable \emph{single} fake bid, which happens when:

\begin{corollary}
The proposed protocol is \emph{not} susceptible to fake bids by the seller, as long as there is no repeated auction, if and only if
\[L\ge \max_{s}\frac{\mathbb{E}[s-B_2|B_2<s<B_1]\Pr(B_2<s<B_1)}{\Pr(s>B_1)}\,.\]
\end{corollary}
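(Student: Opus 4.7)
The plan is to invoke \Cref{lemma:singleton_fake_bid} to reduce the problem to the single-fake-bid case, compute the seller's expected utility from that single bid by a case analysis on where $s$ sits relative to $B_1$ and $B_2$, and then solve the resulting inequality for $L$.

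First, by \Cref{lemma:singleton_fake_bid}, it suffices to show that the protocol is robust against \emph{any} singleton fake bid $S=\{s\}$ if and only if the stated inequality holds. So I fix an arbitrary candidate fake bid $s\ge 0$ and compute the seller's expected gain from inserting it. Specializing the two terms derived just above the lemma to $|S|=1$, the gain on the ``losing'' side of $B_1$ becomes $L\cdot\Pr(s>B_1)$ (the seller refuses to reveal $s$ when it exceeds $B_1$, forfeiting the lockup), while on the ``winning'' side $s\le B_1$ the seller reveals $s$, and the increase in the price paid by the true winner is $\max\{s,B_2\}-B_2 = (s-B_2)\mathbf{1}\{s>B_2\}$.

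Combining the two, the seller's expected benefit from posting the single fake bid $s$ is
\[
U(s) \;=\; \mathbb{E}\bigl[(s-B_2)\,\mathbf{1}\{B_2<s\le B_1\}\bigr]\;-\;L\cdot\Pr(s>B_1).
\]
I would then rewrite the first expectation in conditional form as $\mathbb{E}[s-B_2\mid B_2<s<B_1]\Pr(B_2<s<B_1)$ (the measure-zero boundary $\{s=B_1\}$ is irrelevant since $D$ is continuous). The protocol resists fake bids precisely when $U(s)\le 0$ for every choice of $s$; rearranging (and noting that $\Pr(s>B_1)>0$ on the relevant range, since otherwise the numerator vanishes too and the inequality is trivial) yields the pointwise condition
\[
L \;\ge\; \frac{\mathbb{E}[s-B_2\mid B_2<s<B_1]\,\Pr(B_2<s<B_1)}{\Pr(s>B_1)}.
\]
Taking the supremum over $s$ gives the ``if and only if'' asserted by the corollary, completing the argument.

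I do not anticipate a substantial obstacle here, since \Cref{lemma:singleton_fake_bid} has already done the heavy lifting by collapsing the combinatorial search over bid sets; the only points requiring care are (i) correctly interpreting the ``max'' in the gain expression when $S=\{s\}$ so that the seller's benefit is indeed $(s-B_2)_+$ restricted to $s\le B_1$, and (ii) handling the boundary event $\{s=B_1\}$, which is null under the continuity hypothesis on $D$ assumed in \Cref{theorem:protocol}.
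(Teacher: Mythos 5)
Your argument is correct and matches the paper's (largely implicit) reasoning exactly: the corollary is obtained by applying \Cref{lemma:singleton_fake_bid} to reduce to singleton fake bids, specializing the seller's expected-utility expression displayed just before the lemma to $S=\{s\}$, and rearranging $U(s)\le 0$ into the stated bound on $L$. Your two points of care (the $(s-B_2)_+$ interpretation restricted to $s\le B_1$, and the null boundary event under continuity of $D$) are handled appropriately.
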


In the model where the private values of the $n$ bidders are drawn independently at random from a distribution $D$ with a probability density function $f(x)$ and cumulative distribution function $F(x)=\int_{-\infty}^x f(t)dt$, we have that
(we remind the reader that the probabilities are over the randomness of the $b_i$'s)
\[\Pr(B_1<s)=\Pr(\forall i: b_i < s) = F(s)^n\,.\]

Further, by symmetry,
\begin{align*}
\Pr(B_2<s<B_1)
&= n\cdot \Pr(B_2<s<B_1\wedge \text{bidder 1 is the winner})
\\ &=nF(s)^{n-1}(1-F(s))
\, .
\end{align*}

Also, using the fact that $\mathbb{E}[X]=\int_0^\infty \Pr(X>x)dx$ for any nonnegative random variable $X$, we get that
\begin{align*}
&\mathbb{E}(B_2|B_2<s<B_1)
=\int_0^s \Pr(B_2>x|B_2<s<B_1)dx
\\ &= \frac{1}{\Pr(B_2<s<B_1)}\int_0^s \Pr(x<B_2<s<B_1)dx
\\ &= \frac{1}{\Pr(B_2<s<B_1)}\int_0^s (\Pr(B_2<s<B_1) - \Pr(B_2<x<B_1))dx
\\ &= s - \frac{1}{nF(s)^{n-1}(1-F(s))}\int_0^s nF(x)^{n-1}(1-F(x)) dx
\\ &= s - \frac{1}{F(s)^{n-1}(1-F(s))}\int_0^s F(x)^{n-1}(1-F(x)) dx
\end{align*}

Substituting the formulae to the bound on $L$,
we obtain the simplified corollary:

\begin{corollary}
When all bidders draw their private value independently at random from a
distribution with a cumulative probability function $F$, the auction is equilibrium-truthful for the seller so long as there is no repeated auction exactly when
\[L\ge \max_{s} \frac{1}{F(s)^n}\int_0^s F(x)^{n-1}(1-F(x)) dx \, , \]
where the maximum is taken over all values of $s$ in the support of $f$.
\end{corollary}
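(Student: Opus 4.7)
The plan is to derive the simplified bound by direct substitution of the explicit expressions worked out in the preceding paragraphs into the inequality from the previous corollary, and then algebraically reduce. No new probabilistic machinery is needed; the corollary is essentially a bookkeeping consequence of the prior computations under the i.i.d.\ draws from $D$.

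First, I would rewrite the previous corollary's right-hand side using the three building blocks computed just above it: (i) $\Pr(s > B_1) = \Pr(B_1 < s) = F(s)^n$, which follows from independence and $\Pr(b_i < s) = F(s)$ for each $i$; (ii) $\Pr(B_2 < s < B_1) = nF(s)^{n-1}(1-F(s))$, obtained by conditioning on which of the $n$ bidders has the single value exceeding $s$; and (iii) the conditional expectation identity $\mathbb{E}[B_2 \mid B_2 < s < B_1] = s - \frac{1}{F(s)^{n-1}(1-F(s))}\int_0^s F(x)^{n-1}(1-F(x))\,dx$, which was derived using the tail-integral formula. From (iii), I immediately get $\mathbb{E}[s - B_2 \mid B_2 < s < B_1] = \frac{1}{F(s)^{n-1}(1-F(s))}\int_0^s F(x)^{n-1}(1-F(x))\,dx$.

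Next, I would plug these into the numerator and denominator of the bound. The numerator $\mathbb{E}[s-B_2 \mid B_2 < s < B_1]\cdot \Pr(B_2 < s < B_1)$ collapses because the factor $F(s)^{n-1}(1-F(s))$ in the denominator of the conditional expectation cancels exactly with $\Pr(B_2 < s < B_1)$ (up to the overall constant factor from (ii)), leaving the integral $\int_0^s F(x)^{n-1}(1-F(x))\,dx$. Dividing by $\Pr(s > B_1) = F(s)^n$ from (i) yields the quoted form $\frac{1}{F(s)^n}\int_0^s F(x)^{n-1}(1-F(x))\,dx$. Finally, I would note that the supremum should be restricted to $s$ in the support of $f$: outside the support, either $F(s) \in \{0,1\}$ makes the expression degenerate (trivial or undefined), or raising/lowering $s$ off the support cannot yield a strictly larger value of the ratio since all relevant probabilities flatten.

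The only non-routine subtlety will be verifying well-definedness at the boundary of the support, i.e., ensuring that $F(s) > 0$ wherever the fraction is evaluated (so division by $F(s)^n$ is legal), and checking that the integrand $F(x)^{n-1}(1-F(x))$ is integrable on $[0,s]$—both of which follow from continuity of $F$ and the assumption \eqref{eq:dist_req} that bounds the hazard-rate reciprocal. Modulo this sanity check, the corollary is a direct algebraic consequence of the prior corollary combined with the closed-form expressions for $F(s)^n$, $nF(s)^{n-1}(1-F(s))$, and $\mathbb{E}[B_2 \mid B_2 < s < B_1]$; no additional argument is required.
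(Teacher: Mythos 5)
Your proposal follows exactly the route the paper takes: the corollary is obtained by substituting the three closed-form expressions for $\Pr(B_1<s)=F(s)^n$, $\Pr(B_2<s<B_1)=nF(s)^{n-1}(1-F(s))$, and $\mathbb{E}[B_2\mid B_2<s<B_1]$ into the bound of the preceding corollary and simplifying; the paper offers no argument beyond this substitution, so in approach you and the paper coincide. The one point you should not gloss over is the quantity you dismiss parenthetically as ``the overall constant factor from (ii)'': that factor is $n$, and it does not cancel against anything. Carrying the substitution through literally, the numerator $\mathbb{E}[s-B_2\mid B_2<s<B_1]\cdot\Pr(B_2<s<B_1)$ equals $n\int_0^s F(x)^{n-1}(1-F(x))\,dx$, so the bound becomes $\frac{n}{F(s)^n}\int_0^s F(x)^{n-1}(1-F(x))\,dx$, i.e., $n$ times the displayed expression. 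Since $n$ is precisely the asymptotic parameter (it is what produces the $\frac{1}{n}$ decay in \Cref{theorem:L_lower_bound}), it cannot be absorbed as a constant; you should either retain it explicitly or reconcile the discrepancy with the stated display (the paper itself is silent on this). Apart from that bookkeeping issue, your derivation and the boundary/well-definedness remarks match what the paper does.
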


Under a distribution with specified lower bound on its support, the following Theorem can be shown (see \Cref{app:proof_L_lower_bound}):

\begin{theorem} \label{theorem:L_lower_bound}
Suppose that all bidders draw their private value independently at random from a
distribution supported on $[a,\infty)$ for some $a>0$, with a cumulative probability function $F$ and density function $f$, such that $\frac{1-F(s)}{f(s)}$ is bounded from above on the support of $f$.
Let $L(n)$ be the lower bound on $L$ for $n$ bidders.
Then, the auction is equilibrium-truthful for the seller so long as there is no repeated auction, if
$L(n)\sim \frac{1}{n}\cdot \sup_{s} \frac{1-F(s)}{f(s)}$ as $n\to\infty$.
\end{theorem}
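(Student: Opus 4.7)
The plan is to show that the minimal admissible $L$, which by the preceding corollary equals
\[
L(n) \;=\; \max_{s \in \operatorname{support}(f)} g_n(s), \qquad g_n(s) := \frac{1}{F(s)^n}\int_0^s F(x)^{n-1}(1-F(x))\,dx,
\]
satisfies $n L(n) \to C := \sup_s \frac{1-F(s)}{f(s)}$, where $C<\infty$ by hypothesis. My first move is the substitution $u = F(x)/F(s)$ inside $g_n(s)$: since $F$ is continuous and strictly increasing on $[a,\infty)$, this is a valid monotone change of variables giving $dx = (F(s)/f(x))\,du$, and writing $y(u) := F^{-1}(uF(s))$ yields the clean form
\[
g_n(s) \;=\; \int_0^1 u^{n-1}\,\frac{1 - F(y(u))}{f(y(u))}\,du.
\]
The crucial structural features are that $n u^{n-1}$ is a probability density on $[0,1]$ concentrating at $u=1$, that $y(1) = s$, and that the remaining factor in the integrand is uniformly bounded by $C$.

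The upper bound is then immediate: using $\frac{1-F(y)}{f(y)} \le C$ pointwise gives $g_n(s) \le C\int_0^1 u^{n-1}\,du = C/n$ for every $s$ in the support, so $n L(n) \le C$ for all $n$. For the matching lower bound, fix $\varepsilon>0$ and choose $s^\star$ in the support with $\frac{1-F(s^\star)}{f(s^\star)} \ge C-\varepsilon$. Assuming $f$ is continuous at $s^\star$ (as is the case for every distribution explicitly called out after the theorem), the function $h(u) := (1-F(y(u)))/f(y(u))$ is continuous at $u=1$ with $h(1) = \frac{1-F(s^\star)}{f(s^\star)}$. Splitting $\int_0^1 n u^{n-1} h(u)\,du$ at $1-\delta$, using $\int_0^{1-\delta} n u^{n-1}du = (1-\delta)^n \to 0$ together with the global bound $h\le C$ on the left piece and continuity of $h$ near $1$ on the right piece (so the right integral is within $o(1)$ of $h(1)\cdot(1-(1-\delta)^n) \to h(1)$), one obtains $n g_n(s^\star) \to h(1) \ge C-\varepsilon$. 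Since $n L(n)\ge n g_n(s^\star)$, this yields $\liminf_n n L(n)\ge C-\varepsilon$, and letting $\varepsilon\to 0$ closes the gap with the upper bound, establishing $L(n) \sim C/n$.

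The main obstacle I foresee is the regularity of $f$ at the near-maximizer $s^\star$: the Laplace-style concentration step above requires continuity of $f$ at $s^\star$, which ``continuous distribution'' in the sense of continuous $F$ alone does not automatically supply. For merely measurable $f$ one must pass to a Lebesgue point of $(1-F)/f$ arbitrarily close to the essential supremum and replace the pointwise limit by a Lebesgue-differentiation argument; however, under the mild added assumption that $f$ be continuous on its support, which all parametric examples cited (uniform, exponential, log-normal) satisfy, the clean argument sketched above suffices. A minor secondary concern is verifying that the supremum $C$ is attained or approached by interior points of the support rather than at isolated boundary atoms of $F$; the hypothesis $a>0$ combined with finiteness of $C$ rules out pathologies near the left endpoint and makes the approximating sequence $s^\star \to \arg\!\sup$ well-defined.
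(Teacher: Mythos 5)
Your proposal is correct and follows the same essential route as the paper's proof: a change of variables mapping the integral $\int_0^s F(x)^{n-1}(1-F(x))\,dx$ onto $[0,1]$ (the paper uses $t=F(x)$, you use the normalized $u=F(x)/F(s)$, which is the same idea), followed by a Laplace-type concentration of $u^{n-1}$ at the upper endpoint to replace $(1-F)/f$ by its value at $s$, yielding $n\,g_n(s)\to\frac{1-F(s)}{f(s)}$. Where you genuinely improve on the paper is the interchange of $\lim_{n\to\infty}$ with $\sup_s$: the paper disposes of this with a one-line appeal to ``Fubini,'' which is not really the relevant tool, whereas your two-sided argument --- a uniform upper bound $n\,g_n(s)\le C$ valid for every $s$ simultaneously, plus a lower bound obtained by evaluating at a single near-maximizer $s^\star$ --- justifies the interchange cleanly and is the right way to close this step. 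You are also more honest than the paper about the regularity needed for the concentration step: both proofs implicitly require continuity of $(1-F)/f$ at the (near-)maximizer, and your remark that for merely measurable $f$ one should pass to a Lebesgue point correctly identifies the only real gap, which is shared by, and unacknowledged in, the paper's own argument. In short: same approach, slightly tighter execution.
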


\subsection{Repeated Auction}
\label{subsec:repeated_auction}

Let, again, the highest bid be $B_1$ and the second-highest bid be $B_2$, unknown to the seller.
We will analyze the conditions that make the seller's utility be maximized without any repeated auction.

In the first auction, the seller pays $B_1$ and receives $B_1-g(B_1)$ (where $g$ is the fee function), so in total they lose $g(B_1)$. Had the seller not done this attack, the total gain would have been $B_2-g(B_2)$.

In the second auction, the seller posts $B_1-\varepsilon$ as a bid, thus receiving essentially $B_1-g(B_1)$ for the item. Thus, in total, instead of receiving $B_2-g(B_2)$, the seller receives $B_1-2g(B_1)$. So the seller is acting truthfully if and only if
\[ B_2-g(B_2) \ge B_1-2g(B_1) \Leftrightarrow B_1-B_2 \le 2g(B_1)-g(B_2) \,. \]

Thus, if we find a fee function $g$ such that $B_1-B_2 \le 2g(B_1)-g(B_2)$ (in expectation), the seller would be acting truthfully, and not try to insert fake bids so as to perform repeated auctions. Assuming a nondecreasing fee function, where $g(B_1) \ge g(B_2)$, it is enough to require that $B_1-B_2\le g(B_1)$, i.e., that the fee provides an estimate for an upper bound on the gap between the two highest bids.

\begin{theorem}
[\Cref{app:proof_fee_asymptotics}]
\label{theorem:fee}
If the distribution $D$ satisfies
\[\sup_{x\in\operatorname{support}(f)} \frac{1-F(x)}{f(x)} < \infty\, ,\]
then choosing the fee function $g(x)=\alpha x$ for an appropriate $\alpha = o(1)$, i.e., such that $\lim\limits_{n\to\infty}\alpha(n)=0$,
guarantees that the seller's dominant strategy is to not run repeated auctions by inserting fake bids.
\end{theorem}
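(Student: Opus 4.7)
The plan is to reduce the theorem to showing $\mathbb{E}[B_1 - B_2]/\mathbb{E}[B_1] = o(1)$ and then to exploit the hazard-rate condition $M := \sup_x (1-F(x))/f(x) < \infty$ to pin down numerator and denominator separately. From the analysis immediately preceding the theorem, any nondecreasing $g$ with $\mathbb{E}[B_1-B_2]\le \mathbb{E}[g(B_1)]$ is sufficient, since $B_1\ge B_2$ and monotonicity of $g$ yield $\mathbb{E}[2g(B_1)-g(B_2)]\ge \mathbb{E}[g(B_1)]$. For the linear ansatz $g(x)=\alpha x$ this specializes to $\mathbb{E}[B_1-B_2]\le \alpha\,\mathbb{E}[B_1]$, so one may take $\alpha(n)$ to be any $o(1)$ sequence dominating $\mathbb{E}[B_1-B_2]/\mathbb{E}[B_1]$.

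To bound the numerator I would use the standard order-statistics identity
\[
\mathbb{E}[B_1 - B_2]\;=\;\int_0^\infty n F(x)^{n-1}(1-F(x))\,dx \, ,
\]
which is essentially the same computation that appears in the proof of \Cref{theorem:L_lower_bound}, and substitute $1-F(x) \le Mf(x)$ to obtain
\[
\mathbb{E}[B_1-B_2] \;\le\; M\int_0^\infty nF(x)^{n-1}f(x)\,dx \;=\; M \, ,
\]
an $n$-independent upper bound on the numerator.

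The main obstacle is then to show $\mathbb{E}[B_1-B_2]/\mathbb{E}[B_1]\to 0$, and I would handle this by cases on whether $\supp(D)$ is bounded above. If $\supp(D)$ is unbounded above, then $F(A)<1$ for every $A>0$, so $F(A)^n\to 0$, and dominated convergence on the compact interval $[0,A]$ gives $\int_0^A(1-F(x)^n)\,dx\to A$. Combined with $\mathbb{E}[B_1]\ge \int_0^A(1-F(x)^n)\,dx$ and the arbitrariness of $A$, this forces $\mathbb{E}[B_1]\to\infty$, so the ratio vanishes because the numerator is bounded by $M$. If instead $\supp(D)\subseteq [0,b]$ is bounded, then $\mathbb{E}[B_1]\to b>0$, and it suffices to show the numerator itself vanishes: on $[0,b]$ the integrand $nF(x)^{n-1}(1-F(x))$ is uniformly dominated by $\max_{u\in[0,1]} n u^{n-1}(1-u) = (1-1/n)^{n-1}\le 1$ and tends to $0$ pointwise on $[0,b)$ (since $F(x)<1$ there), so dominated convergence yields $\mathbb{E}[B_1-B_2]\to 0$. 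In either case the ratio is $o(1)$, and the linear fee function $g(x)=\alpha(n)\,x$ with $\alpha(n)$ chosen just above this ratio satisfies the theorem.
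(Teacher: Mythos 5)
Your proposal is correct, and it shares the paper's overall skeleton --- reduce to $\mathbb{E}[B_1-B_2]\le \mathbb{E}[g(B_1)]$ via monotonicity of $g$, compute $\mathbb{E}[B_1-B_2]=\int_0^\infty nF(x)^{n-1}(1-F(x))\,dx$, and bound it by the $n$-independent constant $A=\sup_x\frac{1-F(x)}{f(x)}$ using $1-F\le Af$ --- but it diverges in how the denominator is handled, and your route is the cleaner one. The paper normalizes by $\mathbb{E}[B_2]$ (taking $\alpha=2\,\mathbb{E}[B_1-B_2]/\mathbb{E}[B_2]$) and therefore must separately establish that $\mathbb{E}[B_2]$ tends to the same limit as $\mathbb{E}[B_1]$ when that limit is finite; this is done by a two-halves order-statistics argument (bounding $B_2$ below by the minimum of the maxima over two disjoint halves of the bids and showing $\Pr(B_1<M-\varepsilon)\to 0$). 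You instead normalize by $\mathbb{E}[B_1]$, which makes the sufficiency condition $\mathbb{E}[B_1-B_2]\le\alpha\,\mathbb{E}[B_1]$ immediate, and in the bounded-support case you show directly that the numerator vanishes by dominated convergence, using the uniform bound $nu^{n-1}(1-u)\le(1-1/n)^{n-1}\le 1$ together with pointwise decay of $nF(x)^{n-1}$ on $[0,b)$. This avoids the two-halves lemma entirely. The two case splits (support bounded versus unbounded, as opposed to $\mathbb{E}[B_1]$ finite versus infinite) coincide, since your own argument in the unbounded case shows $\mathbb{E}[B_1]\to\infty$ exactly when the support is unbounded. The only point worth making explicit is that in the bounded case the denominator stays bounded away from zero, which holds because $D$ is continuous and nonnegative, hence non-degenerate, so $\mathbb{E}[B_1]\ge\mathbb{E}[v_1]>0$ and is nondecreasing in $n$.
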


\subsection{Final Proof of \Cref{theorem:protocol}}
\label{subsec:final_proof_of_protocol}

\begin{proof}
We prove that the protocol described with choosing the locked-amount $L$ as in \Cref{theorem:L_lower_bound} and the fee function $g$ as in \Cref{theorem:fee}, satisfies the theorem.

We first show that the auction described is equillibrium-truthful.
Indeed, for \Cref{def:eq_truth_bidders} of \Cref{def:eq_truth}, suppose that the seller does not place any fake bids. Then, since the amount each bidder pays in case they win is the second highest price, the auction satisfies the conditions for the bidder IC of second-price auction, and it follows that the best strategy of any bidder is to bid truthfully (i.e., $b_i=v_i$).
On the other hand, for \Cref{def:eq_truth_seller} of \Cref{def:eq_truth}, assuming all bids are truthful, the seller's dominant strategy is to not place any fake bids, as shown in \Cref{theorem:L_lower_bound,theorem:fee}.

Finally, \Cref{theorem:fee} additionally shows that the auction is asymptotically second-price, since the utility for the seller is $(1-\alpha)\mathbb{E}[B_2]$ with $\alpha=o(1)$.
\end{proof}

\printbibliography

\appendix

\section{Proof of Theorem~\ref{thm:noperfectauction}}
\label{app:proof_noperfectauction}

The backbone of the proof is based on the interesting observation which we will prove below in \Cref{lemma:perfect_mech} that any mechanism according to \Cref{def:single_mech} satisfying bidder IC and OCA-proofness (from \Cref{def:single_props}) must be such as to remove all funds, i.e., $\vec r = \vec p$.
Then, it is evident that no desirable mechanism can exist, because the dependence of the removed amount on the current bids is detrimental to the OCA-proofness of the mechanism: in particular, suppose that such a mechanism exists.
Now, consider the OCA of all bidders with the seller: all participants of the OCA agree that only the normal winner (say $i\in [n]$) of the auction will submit a single bid of $b=0$ (and, by the assumptions of \Cref{def:single_mech}, must pay 0) to the smart contract acting as the auctioneer, and that winner will pay the same amount that a typical auction with truthful valuations would want them to pay, i.e., $p_i(v_1, \dots, v_n)$ where $v_j\ \forall j\in [n]$ are considered to be the true private valuations of the bidders, but instead now this payment will stay within the coalition
(and may be distributed to the members of the OCA in such a way that every member, who would previously obtain zero, has an incentive to participate).
Thus, the joint utility of the OCA will be $(v_i - p_i(v_1, \dots, v_n)) + p_i(v_1, \dots, v_n) = v_i$
whereas their utility under the auction mechanism would have been $v_i - \sum\limits_{j=1}^n r_j(v_1, \dots, v_n)$. Thus, if even one bidder $j$ has amount $r_j(b_1, \dots, b_n) > 0$ removed, then the auction mechanism would not be OCA-proof. By \Cref{lemma:perfect_mech}, this also means that no bidder must pay any non-zero amount.
However, by the premises of \Cref{thm:noperfectauction}, it must be the case that some bidder pays some $p_i(b_1, \dots, b_n) > 0$. Hence, this is a contradiction, and there is no such mechanism.

\begin{lemma}
\label{lemma:perfect_mech}
Any mechanism according to \Cref{def:single_mech} satisfying bidder IC and OCA-proofness (from \Cref{def:single_props}) must be such as to remove all funds, i.e., $\vec r = \vec p$.
\end{lemma}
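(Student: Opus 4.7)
The plan is to argue by contradiction. Suppose there exist a bid profile $\vec b$ and a bidder $i$ with $p_i(\vec b) > r_i(\vec b)$, and derive a contradiction from bidder IC and OCA-proofness (\Cref{def:single_props}). First, bidder IC already rules out any positive payment from a loser: if $j$ does not win at $\vec b$ but $p_j(\vec b) > 0$, then $j$'s utility $-p_j(\vec b) < 0$ is strictly worse than the utility $0$ obtained by deviating to the bid $0$ (which, by the footnote in \Cref{def:single_mech}, produces zero payment), contradicting bidder IC. Hence $p_j(\vec b) = r_j(\vec b) = 0$ for every loser $j$, so the hypothesis forces $i$ to be the winner $i^* := x(\vec b)$, with $p_{i^*}(\vec b) > r_{i^*}(\vec b) \geq 0$.

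Next, I would invoke OCA-proofness on the grand coalition $S=[n]$ together with the seller. Adopting the interpretation $v_j := b_j$ (justified because bidder IC makes truthful bidding a dominant strategy for any valuation profile), the non-OCA bid profile is $\vec b$ itself. Consider the deviation in which the coalition has everyone submit a bid of $0$ and then internally reassigns the NFT to the bidder with the largest valuation. Because a bid of $0$ triggers zero payment (and therefore zero removal), and internal transfers preserve the joint utility, the OCA joint utility equals $\max_j v_j$, while the non-OCA joint utility simplifies, using the previous paragraph, to $v_{i^*} - r_{i^*}(\vec b)$. OCA-proofness then gives $\max_j v_j \leq v_{i^*} - r_{i^*}(\vec b)$, which together with $v_{i^*} \leq \max_j v_j$ and $r_{i^*}(\vec b)\geq 0$ forces $v_{i^*} = \max_j v_j$ (the allocation picks the top bidder) and $r_{i^*}(\vec b) = 0$. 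Since $\vec b$ was arbitrary, this yields $r_{x(\vec b')}(\vec b') = 0$ for every bid profile $\vec b'$.

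The crucial third step applies OCA-proofness to the small coalition $\{j,s\}$ for any loser $j$ at $\vec b$, with the deviation in which the coalition's single bid replaces $j$'s truthful bid by the shill value $b_{i^*}-\epsilon$ for arbitrarily small $\epsilon>0$; denote the resulting full bid profile by $\tilde{\vec b}$. Under the deviation, $i^*$ still has the highest bid and still wins; by Myerson's characterization of bidder IC, which applies verbatim because $\vec r$ does not enter bidder utilities, the winner's payment becomes the new second-highest bid, $p_{i^*}(\tilde{\vec b}) = b_{i^*}-\epsilon$. Combining with $r_{i^*}(\tilde{\vec b})=0$ from the previous paragraph, the OCA joint utility equals $b_{i^*}-\epsilon$, while the non-OCA joint utility equals $p_{i^*}(\vec b) - r_{i^*}(\vec b) = p_{i^*}(\vec b)$. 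OCA-proofness then yields $b_{i^*}-\epsilon \leq p_{i^*}(\vec b)$, and sending $\epsilon\downarrow 0$ together with $p_{i^*}(\vec b)\leq b_{i^*}$ from \Cref{def:single_mech} forces $p_{i^*}(\vec b) = b_{i^*}$.

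Finally, the relation $p_{i^*}(\vec b) = b_{i^*}$ holding on the entire winning region of $i^*$ contradicts bidder IC via Myerson's lemma: bidder IC forces the winner's payment to depend only on the other bids $\vec b_{-i^*}$ (it equals the unique critical threshold $c_{i^*}(\vec b_{-i^*})$ at which $i^*$ begins to win), so the relation would require that threshold to equal $b_{i^*}$ for every $b_{i^*}$ in the winning range. Since monotonicity of $x$ in $b_{i^*}$ (itself a consequence of bidder IC) makes that winning range a nontrivial half-line $[c_{i^*}(\vec b_{-i^*}),\infty)$ rather than a single point, this is impossible, yielding the desired contradiction. The main obstacle I anticipate is making the shill-deviation argument rigorous against the OCA definition with $\vec b' \in \realspos^k$ for variable $k$, and justifying that the $r\equiv 0$ conclusion of the grand-coalition step transfers from truthful profiles to the shill profile $\tilde{\vec b}$; this transfer is essentially free because the mechanism's functions $x,\vec p,\vec r$ are oblivious to any ``valuation'' interpretation, and $\tilde{\vec b}$ can itself be re-interpreted as truthful for valuations equal to its coordinates.
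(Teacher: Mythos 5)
Your steps 1 and 2 are fine (step 1 is the paper's Individual Rationality observation, and step 2 is essentially the grand-coalition argument the paper uses in the main body of the theorem's proof; the ``internal reassignment'' of the NFT sits uneasily with the paper's footnote that the NFT must be distributed through the mechanism, but that is repairable by having only the intended winner submit the single zero bid, exactly as the paper does). The genuine gap is in step 3. You assert that after the colluding loser $j$ shills up to $b_{i^*}-\epsilon$, ``by Myerson's characterization \dots the winner's payment becomes the new second-highest bid.'' Myerson's lemma does \emph{not} say this: it says the winner's payment is the critical threshold of the \emph{given} monotone allocation rule $x$, and that threshold equals the second-highest bid only when $x$ is ``highest bid wins.'' Bidder IC permits many other monotone allocation rules. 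Concretely, consider the posted-price mechanism: fix $P>0$, let the lowest-indexed bidder with $b_i\ge P$ win and pay $P$, all others pay $0$, and $\vec r\equiv 0$. This is bidder IC (the critical value is the constant $P$), it violates the lemma's conclusion ($p\neq r$ on profiles where someone bids at least $P$), yet it is completely immune to your deviation: the winner's payment is $P$ no matter what the losers bid, so your OCA comparison degenerates to $P\le P$ and yields no contradiction. Hence your argument cannot establish the lemma for this mechanism, even though the mechanism \emph{is} in fact ruled out by OCA-proofness.

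The reason the paper's proof succeeds where yours stalls is that it uses a different coalition: a single bidder $i$ with a valuation \emph{just below} the price they would have to pay agrees with the seller to \emph{overbid and win}. The bidder's individual loss is only the small slack $\alpha$ between their value and the payment, while the seller captures the full payment (net of removal, which step 2 has already forced to zero); the coalition profits whenever the seller's revenue at the manufactured winning profile exceeds the revenue when $i$ bids $0$, and the paper shows such a profile must exist whenever the seller's revenue is positive anywhere. This attack works against \emph{any} bidder-IC mechanism with positive seller revenue (including the posted-price example above), because it exploits the fact that payments recycled inside the coalition are not a joint loss, rather than relying on the winner's payment being sensitive to losers' bids. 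To complete your proof you would need to replace step 3 with a coalition of this overbid-to-win type (or otherwise justify payment sensitivity to losers' bids, which is false in general).
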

\begin{proof}
We now move on to the second part of the proof, i.e., proving that if the mechanism is required to not match exactly the full removal rule given above ($\vec r = \vec p$), then no possible mechanism can simultaneously satisfy the properties of bidder IC and OCA-proofness. First, we make the critical observation that OCA-proofness implies the classical Individual Rationality property from traditional auction theory (non-winning bidders pay zero)\footnote{Alternatively to the proof by OCA, this could also arise just from bidder IC: if a bidder bids 0, then they must pay 0 by \Cref{def:single_mech}, thus by bidder IC, the utility of every bidder who bids truthfully must be non-negative, which means that non-winning, truthful bidders must pay 0.}, by contradiction: suppose that there was a bidder $i\in [n]$ that bid $b_i$ such that $x(b_1, \dots, b_n) \neq i$ (they were not the auction winner), then the OCA of that bidder (along with any other non-winning bidder for whom the Individual Rationality property does not hold, i.e., they pay a strictly positive amount) with the winning bidder and the seller would obtain strictly greater joint utility by not submitting (or equivalently, submitting zero) bids for the users whose payments would not satisfy the Individual Rationality property, since the joint utility will be greater by that (positive) amount. Hence, a mechanism that is not Individually Rational would necessarily not be OCA-proof, which is a contradiction. Hence, the conclusion is that the mechanism in question has to make only the winning bidder pay a potentially non-zero amount (and potentially remove some portion of that amount). Individual Rationality further implies that the only source of revenue for the seller is the payment (minus the removed amount) of the winning bidder.

To proceed, we now remark that the payment rule $\vec p$ alongside the allocation rule $x$ are subject to Myerson's Lemma \parencite{myerson} since they have to satisfy the property of bidder IC, which is precisely the sense of DSIC (dominant strategy incentive compatibility) used in auction theory. In particular, this has two implications: (we will refer to them as property 1 and 2, respectively, in what follows)
\begin{enumerate}
    \item The allocation rule $x$ is monotone, which in our notation, means that if for some bids $b_1, \dots, b_n$ and some $i \in [n]$, it holds that $x(b_1, \dots, b_n) = i$, then $\forall b_i' \geq b_i : x(b_i', \vec b_{-i}) = i$ and also if for some bids $b_1, \dots, b_n$ and some $i \in [n]$, it holds that $x(b_1, \dots, b_n) \neq i$, then $\forall b_i' \leq b_i : x(b_i', \vec b_{-i}) \neq i$.
    \item The payment rule $p_i(b_1, \dots, b_n)$ where $i = x(b_1, \dots, b_n)$ is the unique rule that imposes the payment which is the minimum bid $b$ such that $x(b,\vec b_{-i}) = i$.\footnote{This is because a bid of 0 is guaranteed to pay 0.}
\end{enumerate}

The rest of the proof will proceed on the basis of a contradiction: assume that there is a mechanism such that $u_\text{seller}(b_1, \dots, b_n) > 0$ for some \textit{specific} bids $b_1, \dots, b_n$ (this is equivalent to saying that there are bids that make the supposed equality $\vec r = \vec p$ not hold) and prove that this is inconsistent with the properties of OCA-proofness and the above 2 properties arising from the bidder IC. In particular, the target is, of course, to construct an OCA that would violate the OCA-proofness of the mechanism. We shall construct the simplest possible OCA: that of one bidder with the seller. Target: without the OCA, the bidder would be hopeless (cannot get the item), but with the OCA, the bidder can now magically get the item, and the coalition will marginally benefit from the OCA (i.e., the seller's utility will outbalance any loss from the individual utility of the bidder, since they would not get the item in the first place without the OCA). We will construct such an OCA. According to that, we analyze the joint utilities and reverse engineer:

\begin{itemize}
    \item With the OCA's false, agreed-upon bid $b_i'$ (everybody else bids their true values, due to bidder IC): $u_{\text{joint}(i, s)}(b_i', \vec v_{-i}) = u_\text{seller}(b_i', \vec v_{-i}) + (v_i - p_i(b_i', \vec v_{-i}))$.
    \item Without the OCA, due to bidder IC, the bidder would bid their true value, insufficient to get them the item (but at the same time, they pay zero as we proved above, thus obtaining zero utility): $u_{\text{joint}(i, s)}(v_i, \vec v_{-i}) = u_\text{seller}(v_i, \vec v_{-i}) + (0)$.
\end{itemize}

We want to construct the OCA, i.e., make it so that \[u_\text{seller}(b_i', \vec v_{-i}) + (v_i - p_i(b_i', \vec v_{-i})) > u_\text{seller}(v_i, \vec v_{-i})\,.\] If $b_i' < v_i$ ($i$ agreed to underbid), then $i$ would not get the object (due to the $x$'s monotonicity as above property 1), so $i$ must overbid, and thus on its own, would obtain negative utility. Thus, we set its valuation to be $v_i = p_i(b_i', v_{-i}) - \alpha$ for some $\alpha > 0$ to be discovered later, and we now only want $\alpha < u_\text{seller}(b_i', \vec v_{-i}) - u_\text{seller}(v_i, \vec v_{-i})$ for the OCA to game the mechanism.

By the assumption of the \textit{specific} bids that make the removed amount not be equal to the whole payment, we have that there exist $b_1, \dots, b_n$ such that $u_\text{seller}(b_1, \dots, b_n) > 0$. Then, we claim that there exist \textit{specific} bids $b_1', b_2', \dots, b_n'$ and bidder $i\in [n]$ such that $u_\text{seller}(b_i', \vec b_{-i}') > u_\text{seller}(0, \vec b_{i-}')$. Proof by contradiction: if there were not, then for all bids $b_i'$ and $i\in [n]$, it would be true that $u_\text{seller}(b_i', \vec 0) \leq 0 \Rightarrow u_\text{seller}(b_i', \vec 0) = 0$ and since this is for all $i$, $u_\text{seller}(b_1, \dots, b_n) = 0$ for all bids $b_1, \dots, b_n$ which is a contradiction since we assumed there are some bids that do not satisfy this.

We will now use these specific values $b_1', b_2', \dots, b_n'$ that we proved they exist: define the valuations of other users $\vec v_{-i} = \vec b_{-i}'$, and $i$'s as has already been discussed ($\alpha$ is still free to be chosen later). The final observation needed is that the specific bid of a bidder should not matter for the determination of the removal rule $\vec r$ as long as they are ``in the same range of item-getting'' (i.e., if both a current bid and an alternative bid would either make them win or lose). Formally, we will prove through this observation that $u_\text{seller}(v_i, \vec v_{-i}) = u_\text{seller}(0, \vec v_{-i})$ (because this also holds for the payment rule, by properties 1 and 2 above), i.e., it does not matter to the seller whether a non-winning bidder bid some $v_i > 0$ or exactly zero. Then, set any $\alpha$ such that $0 < \alpha < u_\text{seller}(b_i', \vec b_{-i}') - u_\text{seller}(0, \vec b_{-i}')$ and the OCA is complete, thus our proof by contradiction is completed.

We now proceed to prove the above observation:
we have that if $x(b_1, \dots, b_n) \neq i$ and $x(b_i', \vec b_{-i}) \neq i$, then by \Cref{def:single_mech} and OCA-proofness as above, $r_i(b_1, \dots, b_n) = r_i(b_i', \vec b_{-i}) = 0$. In the other case that $x(b_1, \dots, b_n) = x(b_i', \vec b_{-i}) = i$, if we assume that $r(b_i, \vec b_{-i}) > r(b_i', \vec b_{-i})$ then the mechanism would not be OCA-proof: consider the situation where all bidders have valuations $v_j = b_j$ for all $j\in [n]$ and the OCA of $i$ with the seller: $u_{\text{joint}(i,s)}(b_i, \vec b_{-i}) = u_\text{seller}(b_i, \vec b_{-i}) + (b_i - p_i(b_i, \vec b_{-i})) = b_i - r(b_i, \vec b_{-i})$ but could be manipulated with the OCA because the payments remain the same:
\begin{align*}
u_{\text{joint}(i,s)}(b_i', \vec b_{-i})
&= u_\text{seller}(b_i', \vec b_{-i}) + (b_i - p_i(b_i', \vec b_{-i}))
\\ &= b_i - r(b_i', \vec b_{-i})
\\ &> b_i - r(b_i, \vec b_{-i}) 
\\ &= u_{\text{joint}(i,s)}(b_i, \vec b_{-i})
\, .
\end{align*}
This is a contradiction to the OCA-proofness of the mechanism. (the other case that $r(b_i, \vec b_{-i}) < r(b_i', \vec b_{-i})$ also leads to the same contradiction by taking $v_i = b_i'$, but $v_j = b_j$ for $j\neq i$)
\end{proof}

\section{Proof of Lemma~\ref{lemma:singleton_fake_bid}}
\label{app:proof_singleton_fake_bid}

Suppose there is a set with a positive expected utility. Then there is a set with minimal size among all such sets. This set is nonempty, since for the empty set the expected utility is 0. Let this set be $S$. Assume for the sake of contradiction that $|S|>1$, and take $s_1=\max S$ and let $s_2=\max\{s\in S: s < s_1\}$.

By the assumption that $S$ has \emph{minimal size} among all sets with positive expected utility, if $s_1$ is taken out, the utility becomes non-positive, and in particular decreases. On the other hand, the expected utility increases by $L\Pr(s_1>B_1)$ and decreases only when $B_2<s_1<B_1$ by: $\mathbb{E}[s_1-B_2|s_2<B_2<s_1<B_1]\Pr(s_2<B_2<s_1<B_1) + (s_1-s_2)\Pr(B_2<s_2<s_1<B_1)$ (where we used the law of total expectation).

The claim that the set $\{s_1\}$ has a positive utility is equivalent to the claim that $L\Pr(s_1>B_1)<\mathbb{E}[s_1-B_2|B_2<s_1<B_1]\Pr(B_2<s_1<B_1)$, so the Lemma will follow if we show that
\[\mathbb{E}[s_1-B_2|B_2<s_1<B_1]\Pr(B_2<s_1<B_1)\]
is at least
\begin{align*}
    &\mathbb{E}[s_1-B_2|s_2<B_2<s_1<B_1]\Pr(s_2<B_2<s_1<B_1)\\ &+ (s_1-s_2)\Pr(B_2<s_2<s_1<B_1)
\end{align*}

Indeed, by further conditioning on $B_2$ and using the law of total expectation, we can see that the first value equals
\begin{align*}&\mathbb{E}[s_1-B_2|s_2<B_2<s_1<B_1]\Pr(s_2<B_2<s_1<B_1) \\
&+ (s_1-B_2)\Pr(B_2<s_2<s_1<B_1)
\, ,
\end{align*}
which is larger than the second value because $s_1-B_2 > s_1-s_2$.

\section{Proof of Theorem~\ref{theorem:L_lower_bound}}
\label{app:proof_L_lower_bound}

We first fix a value of $s$.
By Fubini, we can switch between the limit of $n\to \infty$ and the supremum over $s$, if by first finding the limit for fixed $s$ and then taking the maximum over the support, we can obtain a finite result.

For a fixed value of $s$, we make a change of variables $t=F(x)$ in the integral, so $dx=\frac{dt}{f(F^{-1}(t))}$. Thus the integral is:
\[\int_0^s F(x)^{n-1}(1-F(x))dx = \int_0^{F(s)} t^{n-1}(1-t)\frac{dt}{f(F^{-1}(t))}\,.\]

We now note that the part of the integral that is bounded above by $F(s)-\varepsilon$ is thus also upper-bounded by $O((F(s)-\varepsilon)^n)$ and would therefore not contribute asymptotically even after dividing by $F(s)^n$ (note that we have $t<1$ in the integral). Thus we can approximate this integral asymptotically by replacing $(1-t)/f(F^{-1}(t))$ with its value on the upper limit of the integral; that is, replace $(1-t)\frac{dt}{f(F^{-1}(t))}$ with $(1-F(s))\frac{dt}{f(F^{-1}(F(s)))} = (1-F(s))\frac{dt}{f(s)}$. Now the integral becomes
\begin{align*}
    \int_0^s F(x)^{n-1}(1-F(x))dx &\approx \frac{1-F(s)}{f(s)}\int_0^{F(s)} t^{n-1}dt \\ &= \frac{1-F(s)}{nf(s)}F(s)^n
    \,.
\end{align*}

Thus, dividing by $F(s)^n$ we finally obtain the limit as $L(n,s)=\frac{1-F(s)}{nf(s)}$. Taking the supremum over $s$ yields the stated result.

\section{Proof of Theorem~\ref{theorem:fee}}
\label{app:proof_fee_asymptotics}

We find $\mathbb{E}[B_1]-\mathbb{E}[B_2]$. For that, we use the CDF of $B_1$ $F_{B_1}(x)=F(x)^{n}$, and the CDF of $B_2$ $F_{B_2}(x)=\Pr(B_1<x) + \Pr(B_1>x\wedge B_2<x)=F(x)^n + nF(x)^{n-1}(1-F(x))$. Thus, since for a nonnegative random variable $X$ we have $\mathbb{E}[X]=\int_0^\infty \Pr(X\ge x)dx=\int_0^\infty (1-F(x))dx$, we get that
\begin{align*}
&\mathbb{E}[B_1]-\mathbb{E}[B_2] \\ &= \int_0^\infty \left(1-F(x)^n - (1-F(x)^n-nF(x)^{n-1}(1-F(x))\right) dx\\
&= n\int_0^\infty F(x)^{n-1}(1-F(x)) dx\\ &= n\int_0^\infty F(x)^{n-1}\frac{1-F(x)}{f(x)} f(x) dx
\end{align*}

Let $A=\sup_x \frac{1-F(x)}{f(x)}$, which is finite by the Theorem's premise. Then we finally obtain
\[\mathbb{E}[B_1]-\mathbb{E}[B_2] \le nA\int_0^\infty F(x)^{n-1}f(x)dx = A F(x)^n\Big|^\infty_0=A\]

We will now show that we can always choose
\[\alpha = 2 \frac{\mathbb{E}[B_1-B_2]}{\mathbb{E}[B_2]}\]

We evidently have $\mathbb{E}[B_1-B_2] < \mathbb{E}[g(B_2)]$.

To show that $\lim\limits_{n\to\infty} \alpha(n) = 0$, we distinguish between two cases:
\begin{itemize}
    \item $\lim\limits_{n\to\infty} \mathbb{E}[B_1] = \infty$.
    In this case, we also have $\lim\limits_{n\to\infty} \mathbb{E}[B_2] = \infty$, since we have shown that the difference $\mathbb{E}[B_1]-\mathbb{E}[B_2]$ is finite.
    Therefore $\alpha(n) \le 2 \frac{A}{\mathbb{E}[B_2]} \to 0$.
    
    \item $\lim\limits_{n\to\infty} \mathbb{E}[B_1] = M < \infty$ for some finite $M$. We note that the limit necessarily exists, since this expectation is nondecreasing with $n$. In this case, $\mathbb{E}[B_2]$ also has the same limit $M$ (as shown below), hence the numerator approaches 0 and the denominator is nondecreasing, giving $\lim\limits_{n\to\infty} \alpha(n) = 0$.
    
    It remains to prove that $\lim\limits_{n\to\infty} \mathbb{E}[B_2] = M$. Since $B_2\le B_1$, it suffices to prove that $\liminf \mathbb{E}[B_2] \ge M$. Indeed, we have that $B_2(n)$ is at least the minimum between the best bids between any two arbitrary disjoint sets of the $n$ bids. In particular, we obtain that $B_2(n)$ is at least the minimum between the maximum of the first $n/2$ bids and the maximum of the last $n/2$ bids.
    Thus if we show that for all $\varepsilon>0$ we have that $\lim_{n\to\infty}\Pr(B_1 < M-\varepsilon) = 0$, it would follow that $\liminf \mathbb{E}[B_2(n)] \ge M-\varepsilon$, for any $\varepsilon>0$, so $\liminf \mathbb{E}[B_2(n)] \ge M$ and we will be done.
    
    For that, we notice that $\Pr(B_1<M-\varepsilon) = F(M-\varepsilon)^n$, so it remains to show that $F(M-\varepsilon)<1$. Indeed, if $F(M-\varepsilon)=1$, we would also have $\Pr(B_1<M-\varepsilon)=1$ exactly (i.e. not in the limit), hence $\mathbb{E}[B_1]\le M-\varepsilon$, a contradiction.

\end{itemize}

This concludes the proof.

\end{document}